\newcommand{\vecf}[1]{\mathfrak X({#1})}
\newcommand{\ddt}[1]{\textrm{d}{#1}/ \textrm{dt}}
\newcommand{\DDt}[1]{\frac{\textrm{D} {#1}}{\textrm{Dt}}}
\newcommand{\lie}[2]{\mathcal{L}_{#1}{#2}}
\newcommand{\extd}[1]{\textrm{d}{#1}}
\newcommand{\tr}[1]{\textrm{tr}({#1})}
\newcommand{\extcovd}{\extd{}_{\nabla}}
\newcommand{\calS}{\mathcal{S}}
\newcommand{\calV}{\mathcal{V}}
\newcommand{\tens}{\otimes}
\newcommand{\tstress}{\textrm{t}}
\newcommand{\tentropy}{\textfrak{d}}
\newcommand{\stressFlux}{u \dot{\wedge} \mathcal{T}}
\newtheorem{proposition}{Proposition}
\newtheorem{remark}{Remark}
\begin{document}

\title{A differential geometric description of thermodynamics in continuum mechanics with application to Fourier-Navier-Stokes fluids}
% Force line breaks with \\
\author{F. Califano}
\email{f.califano@utwente.nl}
 \affiliation{ 
Robotics and Mechatronics Department, University of Twente, 7522 NH Enschede, The Netherlands%\\This line break forced with \textbackslash\textbackslash
}%
\author{R. Rashad}%
\affiliation{ 
Robotics and Mechatronics Department, University of Twente, 7522 NH Enschede, The Netherlands%\\This line break forced with \textbackslash\textbackslash
}%

\author{S. Stramigioli}
\affiliation{ 
Robotics and Mechatronics Department, University of Twente, 7522 NH Enschede, The Netherlands%\\This line break forced with \textbackslash\textbackslash
}%

\begin{abstract}
A description of thermodynamics for continuum mechanical systems is presented in the coordinate-free language of exterior calculus. First, a careful description of the mathematical tools that are needed to formulate the relevant conservation laws is given. Second, following an axiomatic approach, the two thermodynamic principles will be described, leading to a consistent description of entropy creation mechanisms on manifolds. Third, a specialisation to Fourier-Navier-Stokes fluids will be carried through. 
\end{abstract}

\maketitle

\section{Introduction}

In this paper we present a comprehensive description of thermodynamics in a differential geometric setting for a general non relativistic continuum mechanical system. In order to deal with manifestly coordinate-invariant equations valid on general Riemannian manifolds, we make use of a formulation based on exterior calculus. We show how to complement this covariant formulation, already well known in continuum mechanics\cite{Kanso2007OnMechanics}, with a consistent description of thermodynamics valid in general mechanical field theories, generalising the procedure carried through on Euclidean spaces using vector calculus\cite{Asinari2016OverviewDynamics}.

The conceptual steps to derive an entropy equation valid in non-equilibrium thermodynamic conditions are generalised at a manifold level by:
\begin{enumerate}
    \item clarifying the differential geometric nature of the thermodynamic variables (e.g., intensive vs extensive variables) as tensor fields living on manifolds; 
    \item reviewing and referring to the mathematical tools used to describe continuum mechanics on manifolds and to formulate coordinate-free physical conservation laws, like conservation of energy (the first principle);
    \item axiomatically postulating a covariant version of the Gibbs relation on manifolds, which leads to a valid entropy equation for any continuum.
\end{enumerate}

As final contribution we specialise the equations to Fourier-Navier-Stokes fluids, in order to shed light on the entropy creation mechanisms in Newtonian fluids on manifolds.

As part of the background material used in this work, we will compactly present and point out the main properties of covariant field equations of continuum mechanics, whose treatment is distilled from the foundational references\cite{Marsden,Frankel2011TheIntroduction,Marsden1970HamiltonianHydrodynamics,Marsden1984ReductionAlgebras,Marsden1984SemidirectMechanics,Morrison1998,Kanso2007OnMechanics}, which inspired more recent works (e.g., \cite{rashad2021port,rashad2021portb,Gilbert2019AMechanics,Mora2021OnFluids}) aiming at investigating specific geometric structure underlying the equations.

We stress that in this work we will work out covariant field equations of continuum mechanics combined with thermodynamics, without investigating the geometric structure of the resulting model. For example we will not investigate the thermodynamic state space as infinite-dimensional extension of the contact geometry peculiar of finite-dimensional thermodynamic systems, extensively studied in \cite{vanderSchaft2018GeometryProcesses} and references therein. Such an extension is currently under investigation and should encompass infinite-dimensional GENERIC \cite{Grmela2014ContactDynamics} and dissipative port-Hamiltonian formulations \cite{Badlyan2018OpenSystems}.

Nevertheless any geometric structure underlying the equations should stick with the conservation laws presented in this work axiomatically, and could lead to models which can conveniently lead to structure-preserving spatial discretisation schemes (see e.g., \cite{Arnold2006FiniteApplications,10.1007/0-387-38034-5_5,Gawlik2021AFlow}) and novel control strategies. Furthermore we exclude from this work a functional analytic treatment of the equations, and use differential geometric mathematical tools assuming that the stated integrals are well defined and that the tensor fields are smooth enough to make sense of the used differential operators.

The paper is organised as follows. In Sec. \ref{sec:background} we review and refer to the mathematical tools needed to describe conservation laws on manifolds and revise the formulation of continuum mechanics using the covariant language of exterior calculus. In Sec. \ref{sec:thermodynamics} we give the thermodynamic description of continuum mechanical processes, which are specialised to the Fourier-Navier-Stokes case in Sec. \ref{sec:FNS}. Conclusions are given in Sec. \ref{sec:conclusion}.

\paragraph*{Notation} 

A compact, orientable, $n$-dimensional Riemannian manifold $M$ with (possibly empty) boundary $\partial M$ models the spatial container of a non relativistic continuum mechanical system. It possesses a metric field $g$ and induced Levi-Civita connection $\nabla$ (with chart-dependent Christoffel symbols denoted with $\Gamma_{ij}^{k}$). The space of vector fields on $M$ 
is defined as
the space of sections of the tangent bundle $TM$, that will be denoted by $\vecf{M}$. The space of differential $p$-forms is denoted by $\Omega^p(M)$ and we also refer to $0$-forms as functions, $1$-forms as covector fields and $n$-forms as top-forms. For any
$v\in \vecf{M}$, we use the standard definitions for the interior product by $\iota_v:\Omega^p(M) \to \Omega^{p-1}(M)$ and the Lie derivative operator $\mathcal{L}_v$ acting on tensor fields of any valence. 
The Hodge star operator
$\star:\Omega^p(M) \to \Omega^{n-p}(M)$, the volume form $\calV=\star 1$, as well as the "musical" operators $\flat:\vecf{M} \to \Omega^{1}(M)$ and $\#:\Omega^{1}(M) \to \vecf{M}$, which respectively transform vector fields to 1-forms and vice versa, are all uniquely induced by the Riemannian metric in the standard way. 
When making use of Stokes theorem $\int_M \extd \omega=\int_{\partial M} \textrm{tr}(\omega)$ for $\omega \in \Omega^{n-1}(M)$, the trace operator $\textrm{tr}:=i^*$ is the pullback of the canonical inclusion map $i: \partial M \hookrightarrow M$.
When dealing with tensor-valued forms we adopt the additional convention that a numerical index $i\in\{1,2\}$ in the subscript of a standard operator on differential forms indicates whether the operator acts on the ``first leg'' (the tensor value) or on the ``second leg'' (the underlying form) of the tensor-valued form on the respective side of the operator: For a vector-valued $m$-form $u \tens \beta$, for instance, we define $\star_2(u \tens \beta):=u \tens \star\beta$. Vector fields will be identified with vector-valued $0$-forms and top-forms will be identified with covector-valued $(n-1)$-forms in a non ambiguous way when the operators will require it. We will introduce along the paper further operators and constructions where needed. To complement a purely coordinate-free notation, we will represent tensorial quantities of interest in local coordinates, denoted by $x$, when considered insightful, using Einstein summation convention.

\section{Background}
\label{sec:background}

\subsection{Reynolds transport theorem}
We summarise relevant theorems involving differential geometric formulations of differentiation of integrals, discussing useful insights that are needed in the context of this work. We refer to Chapter 4.3 in\cite{Frankel2011TheIntroduction} for thorough
proofs of the results.

\subsubsection{The time-invariant case}
Consider an $n$-dimensional manifold $M$ and a 1-parameter group of diffeomorphisms $\phi_t: M \to M$
with $t \in \mathbb{R}$. The group property indicates that the diffeomorphisms $\phi_t$ generate a flow, i.e., $\phi_s (\phi_t x) =\phi_{t+s} x$. The diffeomorphisms $\phi_t$ induce a motion on any $p$-dimensional compact submanifold $V$ of $M$ since, in a neighbourhood of $V$, we can define $V(t):=\phi_t V \subseteq M$. Furthermore let $u(x)= \ddt{\phi_t(x)}|_{t=0}
$ be the velocity vector field generating the motion $\phi_t$. Notice that $u\in \vecf{M}$ is defined in such a way to be a time-independent vector field on $M$, due to the assumption of the 1-parameter group property of $\phi_t$. Given a time-independent $p$-form $\alpha \in \Omega^p(M)$, we are interested in understanding the rate of change of integrals of the form $I(t):=\int_{V(t)}\alpha=\int_V \phi_t^* \alpha$, that we indicate with $\dot{I}:=\frac{\extd{} I(t)}{\extd{} t}$. An explicit calculation of the latter shows that
\begin{equation}
 \label{eq:ReynoldsAutonomous0}
    \dot{I}=\int_{V(t)} \lie{u}{\alpha},
\end{equation}
which can be further expressed
using Cartan's formula $\mathcal{L}_u=\extd{\circ \iota_u}+\iota_u \circ\extd{}$, and Stokes theorem as
 \begin{equation}
 \label{eq:ReynoldsAutonomous}
    \dot{I}=\int_{\partial V(t)} \tr{ \iota_u{\alpha}} + \int_{V(t)} \iota_u \extd{\alpha}.
\end{equation}
Notice that in case $M$ is a Riemannian manifold and $\alpha$ is a top-form (i.e., $p=n$) the second term in (\ref{eq:ReynoldsAutonomous}) vanishes and using the identity\cite{Hirani2003DiscreteCalculus} (Lemma 8.2.1 in the reference) $\iota_u \alpha= \star \alpha \wedge \star u^{\flat}$ and distributing the trace over the wedge one obtains
\begin{equation}
    \dot{I}=\int_{\partial V(t)} \tr{\star \alpha} \wedge \tr{\star u^{\flat}}.
\end{equation}
The latter expression is useful to clearly see that $\dot{I}=0$ in case the vector field $u$ is parallel to $\partial V(t)$, which is equivalent to $\tr{\star u^{\flat}}=0$. If additionally we chose $V=M$ one obtains $\dot{I}=0$ always, since the fact that $\phi_t$ is a group of diffeomorphisms of $M$ into itself forces the vector field $u$ to be parallel to $\partial M$ (in case $\partial M \neq \emptyset$).

\subsubsection{The time-varying case}
\label{sec:tvc}

For the physical conservation laws we are interested to formalise, the fields that appear under the integral sign are in general 
time-dependent. This is due to the fact that they will be function of physical states, which
evolve in time according to the dynamic equations of the system. Furthermore the domain of integration may vary with diffeomorphisms $\phi_t$ that do \textit{not} form a flow, i.e., the generating vector field $u$ 
may also in general be
time-dependent. We refer in this case to $\phi_t$ as 1-parameter family (instead of group) of diffeomorphisms. This happens e.g., in fluid mechanics, when the volume $V(t)$ follows a fluid portion. In this case the velocity field $u$ is given by the macroscopic velocity vector field of the fluid, and as such is in general time-dependent (as it is the solution of the unsteady Navier-Stokes equations).  

The expression in this case generalises to
\begin{equation}
 \label{eq:ReynoldsNonAutonomous}
    \dot{I}=\int_{V(t)} \dot{\alpha} + \lie{u}{\alpha},
\end{equation}
where $\alpha$ is now a possibly time-dependent $p$-form, $\dot{\alpha}:=\partial \alpha / \partial t$ and $u(t,x)=\ddt{\phi_t(x)}$ is the possibly time-dependent vector field generating the motion of $V(t)$. The proof of this theorem relies on identifying the relevant time-dependent fields with tensor fields on the space-time product manifold $\mathbb{R} \times M^n$, over which the vector field generates a flow, a necessary condition to implement the calculations leading to (\ref{eq:ReynoldsAutonomous0}), which can then be repeated in this case with a space-time velocity field $\Bar{u}=\frac{\partial}{\partial t}+u$, leading to (\ref{eq:ReynoldsNonAutonomous}). We stress that this result relies on the assumption of a Newtonian space-time, in which the absolute time function $t$ can be used as first coordinate on $\mathbb{R} \times M^n$.

\subsection{Continuum mechanics on Manifolds}

We briefly review the differential geometric formulation of continuum mechanics on an $n$-dimensional Riemannian manifold $M$.
We remark that the choice on the tensorial nature of the variables is not unique, especially in the way \textit{stress} is treated. We will use the description which serves best to the scope of this paper, using a so-called \textit{Eulerian}
formulation, i.e., one that treats state variables as tensor fields living on the fixed manifold $M$, which represents the spatial container of the continuum. We will fluently switch from a covariant coordinate-free formulation to one which expresses the tensorial quantities in local coordinates, and refer to \cite{Frankel2011TheIntroduction} and \cite{Kanso2007OnMechanics} for a complete derivation of the equations and discussion about the different formulations involving stress. 

The state variables of a continuum mechanical system are the \textit{mass density} top-form $\mu \in \Omega^n(M)$ (where $\mu=\rho \calV$, being $\rho \in \Omega^0(M)$ the scalar density function and $\calV=\star 1 \in \Omega^n(M)$ the metric-induced volume form) and the Eulerian
velocity vector field of the continuum $u\in \vecf{M}$.
The dynamic equation on $\mu$ 
states conservation of mass and reads

\begin{equation}
\label{eq:masscont}
   \dot{\mu}+\lie{u}{\mu}=0.
\end{equation}

The balance of linear momentum constitutes the other dynamic equation of motion. Expressed in the variable $u$, and excluding for simplicity body forces acting on the continuum, it reads:

\begin{equation}
\label{eq:velocityeq}
    \dot{u}+\nabla_u u=\frac{\tstress^{\#}}{\rho},
\end{equation}
where $\tstress \in \Omega^1(M)$ is the force field representation of the effect of any form of stress present in the continuum. The latter equals the divergence of the \textit{Cauchy stress tensor} $\sigma$, which is symmetric due to the balance of angular momentum. 
We choose a tensorial representation for $\sigma$ as a 2-covariant tensor, i.e., in local coordinates $\sigma=\sigma_{ab} \extd x^a \tens \extd x^b$. In this way the traction force over a surface $S\subset M$ characterised by its unit outward normal vector $n=n^i \frac{\partial}{\partial x^i}$ possesses a metric-independent, covector (force-like) nature $\sigma(n,\cdot)=\sigma_{ab}n^b \extd x^a\in \Omega^1(M)$. In this context the stress power over $S$ is calculated as 
$P_{S}=\int_{S} \sigma(n,u) \extd S$. To make tensorial sense of the divergence operation, (either) one of the two indices of $\sigma$ needs to be raised with the inverse metric tensor, i.e., $\tstress=\textrm{div}(\sigma^{\#_2})$, where, in local coordinates, $\sigma^{\#_2}=\sigma_{am}g^{mb} \extd x^a \tens \frac{\partial}{\partial x^b}$, i.e., $\sigma$ is treated as a covector-valued 1-form, and the sharp operator $\#$ is applied to its second leg. The divergence of a $(1,1)$-tensor field $G$ with components $G^b_{a}$ is then defined by means of the usual covariant differentiation of tensor components together with a contraction, i.e.,  $\textrm{div}(G):=\nabla_b G^{b}_{a} \extd x^a \in \Omega^1(M)$, where $\nabla_c G^{b}_{a}:=\partial_c (G^{b}_{a}) + G^{r}_{a} \Gamma^b_{cr} - G^b_{r} \Gamma^{r}_{ca}$.

In the following we will use a formulation for (\ref{eq:velocityeq}) based on Cartan's calculus, for which the stress $\mathcal{T}$ is defined as a covector-valued $(n-1)$-form, i.e., $\mathcal{T}\in \Omega^1(M) \tens \Omega^{n-1}(M)$, and related to Cauchy stress tensor by $\mathcal{T}=\star_2 \sigma$. Such a representation presents different advantages whose complete discussion is out of the scope of this work, and can be found in Appendix A.b of\cite{Frankel2011TheIntroduction}, and in\cite{Gilbert2019AMechanics,Kanso2007OnMechanics,califano2021energetic}.
In local coordinates, we have for $n=3$ that $\mathcal{T} = \frac{1}{2} \mathcal{T}_{abc} \extd x^a \tens \extd x^b \wedge \extd x^c$ with $\mathcal{T}_{abc} = g^{ml} \sigma_{la} \mathcal{V}_{mbc}$.

A technical advantage of this formulation which is directly useful in the sequel of our analysis is the fact that one has a metric-independent notion of \textit{stress power flux form}, expressed as $\stressFlux \in \Omega^{n-1}(M)$.
The $\dot{\wedge}$ binary operator implements i) a duality pairing on the first leg of the tensors; and ii) a regular wedge $\wedge$ at the second leg.
In the specific case of $\stressFlux$, it takes as input $u$ as a vector-valued 0-form and the covector-valued $n-1$ form $\mathcal{T}$ to produce an $n-1$ form which can be integrated over any surface $S$ in $M$ to get the mechanical stress power $P_S=\int_S \tr{\stressFlux}$.

The explicit relation between the stress field $\tstress$ in (\ref{eq:velocityeq}) and $\mathcal{T}$, involves the \textit{exterior covariant derivative} operator $\extcovd: \Omega^1(M) \tens \Omega^{k}(M) \to \Omega^1(M) \tens \Omega^{k+1}(M)$, whose explicit definition can be found in e.g., Chapter 9.3 in\cite{Frankel2011TheIntroduction}. This operator, which generalises the covariant derivative for tensor-valued forms, and that we treat in this work restricted to the case of interest, implements for $\mathcal{T}$ what the divergence operator implements for $\sigma$. In particular it holds \cite{Kanso2007OnMechanics}
\begin{equation}
   \tstress = \star_2 \extcovd \mathcal{T}=\star_2(\textrm{div}(\sigma^{\#_2}) \tens \mathcal{V}) = \textrm{div}(\sigma^{\#_2}).
\end{equation}

In order to write the momentum balance equation ($\ref{eq:velocityeq}$) in a convenient way
we introduce the velocity 1-form $\nu:=u^{\flat}$. Applying the $\flat$ operator to (\ref{eq:velocityeq}), and using the identity (see Chapter 4 in\cite{Arnold1992} for a proof):
\begin{equation*}
    (\nabla_u u)^{\flat}=\lie{u}{\nu}-\frac{1}{2}\extd{\iota_{u}\nu},
\end{equation*}
one obtains
\begin{equation}
\label{eq:velocityeq1}
    \dot{\nu}+\lie{u}{\nu}-\frac{1}{2}\extd{\iota_{u}\nu}=\frac{\tstress}{\rho},
\end{equation}
which is very convenient for calculation purposes since no covariant derivative appears in the expression. This fact allows to express the equation using exterior forms and the use of Cartan's calculus.
The final version of the momentum equation in these variables is given by
\begin{equation}
\label{eq:momentum}
    \dot{\nu}+\lie{u}{\nu}-\frac{1}{2}\extd{\iota_{u}\nu}=\frac{\star_2 \extcovd \mathcal{T}}{\rho}.
\end{equation}

A key property of the exterior covariant derivative $\extcovd$ which will be used later is the Leibniz rule:
\begin{equation}
	\label{eq:leibnizextcovd}
	\extcovd(\stressFlux) = \extcovd u \dot{\wedge}\mathcal{T} + u \dot{\wedge} \extcovd\mathcal{T}.
\end{equation}
Using the fact that $\extcovd$ of tensor-valued 0-forms is the covariant differential (i.e., $\extcovd u=\nabla u$), and that $\extcovd$ of (scalar-valued) forms is the exterior derivative (i.e., $\extcovd{(u \dot{\wedge} \mathcal{T})}=\extd{(u \dot{\wedge} \mathcal{T})}$), we can rewrite the mechanical stress power over the domain $V(t)$ as
\begin{equation}\label{eq:power_balance}
	\int_{\partial V(t)} \tr{\stressFlux} = \int_{V(t)} \extd{(\stressFlux)} = \int_{V(t)} \nabla u \dot{\wedge}\mathcal{T} + u \dot{\wedge} \extcovd\mathcal{T}.
\end{equation}
Here the \textit{velocity gradient} $\nabla u \in \vecf{M} \tens \Omega^1(M)$ is represented in local coordinates as $\nabla u=\nabla_b u^a \frac{\partial}{\partial x^a} \tens \extd x^b:=(\partial_b u^a +u^r \Gamma^a_{br})\frac{\partial}{\partial x^a} \tens \extd x^b$. Notice that the velocity gradient is geometrically not a gradient, but a vector-valued 1-form defined through the covariant derivative such that $\nabla u (X) = \nabla_X u \in \vecf{M}$ for all vector fields $X \in \vecf{M}$, and its terminology is purely based on convention.

Using $\mathcal{T}=\star_2 \sigma$ and $\extcovd{\mathcal{T}}=\star_2 \textrm{div}(\sigma^{\#_2})=\star_2 \tstress=\tstress \tens \mathcal{V}$, the previous power balance can be rewritten as
\begin{align*}
	&\int_{\partial V(t)} \tr{u \dot{\wedge} \star_2 \sigma} 
	= \int_{V(t)} \nabla u \dot{\wedge} \star_2 \sigma + u \dot{\wedge} \star_2 \tstress = \\
	&= \int_{V(t)} \nabla u \dot{\wedge} \star_2 \sigma + \langle \tstress | u \rangle \mathcal{V},
\end{align*}
with $\langle \cdot | \cdot\rangle$ denoting the duality pairing between a covector-field and a vector-field. 

The following proposition is stated to relate the term $\nabla u \dot{\wedge} \star_2 \sigma$ to
the more familiar double contraction of 2-covariant tensors normally encountered using vector calculus on Euclidean spaces.

\begin{proposition}\label{prop_inner}
	Let $\zeta,\sigma \in T_2^0(M) \cong\Omega^1(M)\tens\Omega^1(M)$ be any  (0,2)-tensor fields on $M$. Then the following identity holds:
	$$\zeta^{\#_1} \dot{\wedge} \star_2 \sigma = (\zeta :\sigma) \calV \in \Omega^{n}(M), $$
	where the double dots denote the contraction of (0,2) tensor fields given by the function $$\varepsilon :\sigma = g^{ia}g^{jb}\zeta_{ij}\sigma_{ab} \in \Omega^0(M).$$
\end{proposition}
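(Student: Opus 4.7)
The plan is to prove the identity by a direct computation in local coordinates, unpacking the definitions of the three operators $\#_1$, $\star_2$, and $\dot{\wedge}$ and then recognising the resulting index contraction as $\zeta : \sigma$ times the volume form. Because both sides are tensorial (they are pointwise constructions from $\zeta$, $\sigma$ and the metric), it suffices to verify the equality in an arbitrary coordinate chart at an arbitrary point.

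First, I would write the two tensor fields as $\zeta = \zeta_{ij}\,\extd x^i \tens \extd x^j$ and $\sigma = \sigma_{ab}\,\extd x^a \tens \extd x^b$ and compute the two operands of $\dot{\wedge}$ separately. Raising the first index of $\zeta$ with the inverse metric gives the vector-valued $1$-form
\begin{equation*}
\zeta^{\#_1} = g^{ik}\zeta_{kj}\,\tfrac{\partial}{\partial x^i} \tens \extd x^j,
\end{equation*}
while applying the Hodge star to the second leg of $\sigma$ yields the covector-valued $(n-1)$-form
\begin{equation*}
\star_2 \sigma = \sigma_{ab}\,\extd x^a \tens \star \extd x^b.
\end{equation*}

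Next I would invoke the definition of $\dot{\wedge}$: the duality pairing acts on the first legs via $\langle \partial/\partial x^i | \extd x^a\rangle = \delta^a_i$, while the second legs are combined by the ordinary wedge. Using the standard Riemannian identity $\extd x^j \wedge \star \extd x^b = g^{jb}\,\calV$, this gives
\begin{equation*}
\zeta^{\#_1} \dot{\wedge} \star_2 \sigma = g^{ik}\zeta_{kj}\,\sigma_{ab}\,\delta^a_i\,g^{jb}\,\calV = g^{ak}g^{jb}\zeta_{kj}\sigma_{ab}\,\calV.
\end{equation*}
Relabelling dummy indices ($k \leftrightarrow i$) recovers exactly $(g^{ia}g^{jb}\zeta_{ij}\sigma_{ab})\calV = (\zeta:\sigma)\calV$, proving the claim.

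I do not anticipate a serious obstacle: the whole argument is bookkeeping on two metric raisings and one application of the Hodge-star wedge identity. The only point requiring mild care is to track which leg each operator acts on, so that $\#_1$ on $\zeta$ contracts with the first leg of $\star_2\sigma$ via the duality pairing, and the remaining second legs $\extd x^j$ and $\star \extd x^b$ combine through the wedge. Once the definitions are transcribed, the identity drops out in one line.
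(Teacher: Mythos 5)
Your proof is correct, and it takes a cleaner route than the paper's. The paper also works in local coordinates, but it expands $\star_2\sigma$ into raw components via the volume-form coefficients ($\star_2\sigma = g^{ml}\sigma_{li}\calV_{mj}\,\extd x^i\tens\extd x^j$ in dimension two), then computes the antisymmetrized components $\gamma_{ab}$ of the resulting $n$-form entry by entry, carrying this out explicitly for $n=2$ and sketching $n=3$. You instead keep $\star\extd x^b$ unexpanded and invoke the standard identity $\extd x^j\wedge\star\extd x^b = g^{jb}\calV$ (the coordinate form of $\alpha\wedge\star\beta = \langle\alpha,\beta\rangle_g\,\calV$ for forms of equal degree), which collapses the entire wedge computation into a single metric contraction. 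The payoff is twofold: your argument is a one-line contraction rather than a case analysis, and it is valid for arbitrary $n$, whereas the paper's proof as written only covers $n=2$ and $n=3$. The one definitional point you rely on — that $\star_2$ distributes over the coordinate expansion, $\star_2(\sigma_{ab}\,\extd x^a\tens\extd x^b)=\sigma_{ab}\,\extd x^a\tens\star\extd x^b$ — is immediate from $C^\infty$-linearity of the Hodge star and is consistent with the paper's own coordinate expression for $\star_2\sigma$. The final index relabelling $g^{ak}g^{jb}\zeta_{kj}\sigma_{ab}\to g^{ia}g^{jb}\zeta_{ij}\sigma_{ab}$ uses the symmetry of $g$, which is fine.
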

\begin{proof}
	For the case $n =2$, we have in local coordinates that $\zeta = \zeta_{ij} \extd{x^i}\tens\extd{x^j}$ and $\star_2 \sigma = g^{ml}\sigma_{li}\calV_{mj}\extd{x^i}\tens\extd{x^j}$.
	Let $\gamma:= \zeta^{\#_1} \dot{\wedge} \star_2 \sigma = \frac{1}{2} \gamma_{ab}\extd{x^a}\wedge\extd{x^b}$.
	Since the action of the wedge-dot is a contraction of the first legs and wedge product on the second legs we have that 
	$$\gamma_{ab} = 2 g^{im} g^{kl}\sigma_{li}\zeta_{m\bar{a}}\calV_{k\bar{b}} = g^{im} g^{kl}\sigma_{li} (\zeta_{ma}\calV_{kb} - \zeta_{mb}\calV_{ka}),$$ 
	where we denote by the bars over $a$ and $b$ anti-symmetrization of indices.
	For $(a,b) = (1,2)$ and using the anti-symmetry of $\calV$ we have that 
	\begin{align*}
		\gamma_{12} =& g^{im}\sigma_{li} (g^{kl}\zeta_{m1}\calV_{k2} - g^{kl}\zeta_{m2}\calV_{k1}) \\ 
		=& g^{im} \sigma_{li} (g^{1l}\zeta_{m1}\calV_{12} - g^{2l}\zeta_{m2}\calV_{21}) \\
		=& g^{im} \sigma_{li}\calV_{12}(g^{1l}\varepsilon_{m1} + g^{2l}\zeta_{m2}) = g^{im} \sigma_{li}g^{kl}\zeta_{mk}\calV_{12}.
	\end{align*}
	Similarly, one can show that $\gamma_{21}= g^{im} \sigma_{li}g^{kl}\zeta_{mk}\calV_{21}$. Thus, we have that $\gamma_{ab} = g^{im} \sigma_{li}g^{kl}\zeta_{mk}\calV_{ab}$ and consequently
	$$\gamma= \zeta^{\#_1} \dot{\wedge} \star_2 \sigma = g^{im} \sigma_{li}g^{kl}\zeta_{mk} (\frac{1}{2} \calV_{ab}\extd{x^a}\wedge\extd{x^b}) = (\zeta :\sigma) \calV.$$
	
	For the case $n =3$, the same line of thought holds with $\gamma$ locally expressed as $\frac{1}{3!} \gamma_{abc}\extd{x^a}\wedge\extd{x^b}\wedge\extd{x^c}$ with 
	$$\gamma_{abc} = 3 g^{im} g^{kl}\sigma_{li}\zeta_{m\bar{a}}\calV_{k\bar{b}\bar{c}} =g^{im} g^{kl}\sigma_{li}\zeta_{mk}\calV_{abc}.$$
\end{proof}

Using this result it holds
\begin{equation*}
    \int_{V(t)} \nabla u \dot{\wedge} \star_2 \sigma=\int_{V(t)} ((\nabla u)^{\flat_1}:\sigma)\mathcal{V}.
\end{equation*}
Now we use the property $(\nabla u)^{\flat_1}=\nabla \nu \in \Omega^1(M) \tens \Omega^1(M)$ to write the power balance (\ref{eq:power_balance}) in the more common integration by parts formula:
\begin{align*}
	&\int_{\partial V(t)} \sigma(n,u) \extd S = \int_{V(t)} [(\nabla \nu : \sigma) + \langle\tstress|u\rangle ] \calV
\end{align*}

To conclude this section, we can further massage the power balance $(\ref{eq:power_balance})$ using the fact that the Cauchy stress tensor $\sigma$ is symmetric, a condition which follows from conservation of angular momentum. In particular, for any $\zeta \in T_2^0(M) \cong\Omega^1(M)\tens\Omega^1(M)$, if $\sigma$ is symmetric, it holds (see Chapter 4 in \cite{Schutz1980GeometricalPhysics})
\begin{equation}
    \zeta:\sigma=\textrm{sym}(\zeta):\sigma,
    \label{eq:symmetry}
\end{equation}
where $\textrm{sym}(\cdot)$ returns the symmetric part of its argument.
It follows that (\ref{eq:power_balance}) can be rewritten as
\begin{align*}
	&\int_{\partial V(t)} \tr{u \dot{\wedge} \star_2 \sigma} 
	=  \int_{V(t)} \varepsilon^{\#_1} \dot{\wedge} \star_2 \sigma + \langle \tstress | u\rangle \mathcal{V},
\end{align*}
where $\varepsilon := \textrm{sym}(\nabla \nu)  \in \Omega^1(
M)\tens\Omega^1(M)$.

In words, the dependence of the stress power on the symmetric part only of $\nabla u$ is a direct consequence of the symmetry of $\sigma$.
The tensor $\varepsilon$, can be shown \cite{Gilbert2019AMechanics} to be equal to:
\begin{equation}
\label{eq:rateofstrain}
\varepsilon = \frac{1}{2}\mathcal{L}_u g \in \Omega^1(M)\tens\Omega^1(M)
\end{equation}
which 
extends the concept of \textit{rate of strain} to Riemannian manifolds. For example, its components in a Euclidean space on a Cartesian chart (where all $\Gamma_{ij}^{k}$ vanish) are $(\mathcal{L}_u g)_{ij}=\partial_j u^i+\partial_i u^j$, clearly resembling the standard vector calculus definition of rate of strain in Euclidean space. 
For completeness, we report the full decomposition of the covariant version of the velocity gradient in its symmetric and anti-symmetric components \cite{Gilbert2019AMechanics}
\begin{equation}
    \nabla \nu=\varepsilon-\frac{1}{2} \extd \nu,
\end{equation}
through which one clearly sees that the power balance (\ref{eq:power_balance}) is not affected by the anti-symmetric part of the velocity gradient, which is proportional to the vorticity $2$-form $\extd \nu$.

\section{Thermodynamics}
\label{sec:thermodynamics}

\subsection{First principle}
The first principle of thermodynamics applied to a continuum assumes the existence of a total energy functional over the spatial domain where the continuum evolves, which is conserved at an integral balance level. More specifically, we indicate with $V(t)$ a volume which follows the material domain of the continuum, over which the total energy is $E=\int_{V(t)}\mathcal{E}$, where $\mathcal{E}:\mathcal{X}\to \Omega^n(V)$, maps the state variables $\xi_i \in \mathcal{X}$ of the continuum to the top-form encoding the total energy density field. In order to postulate the first principle in this geometric setting a mathematical expression for the rate of change of $E$ is needed. In particular we fall in the time-varying case discussed in subsection \ref{sec:tvc}, where the integrand $\mathcal{E}$ is a time-dependent top-form (through the time dependency of the state variables $\xi_i$) and the domain of integration $V(t)$ follows a motion which is not a flow, since its generating velocity vector field $u$ is the macroscopic velocity of the continuum, which is in general time-dependent. We thus use (\ref{eq:ReynoldsNonAutonomous}) to express $\dot{E}$, and postulate the first principle in the form
\begin{equation}
\label{eq:firstPrinciple}
    \dot{E}\stackrel{(\ref{eq:ReynoldsNonAutonomous})}{=}\int_{V(t)} \dot{\mathcal{E}}+\lie{u}{\mathcal{E}}\stackrel{\textrm{1st principle}}{=}\int_{ \partial V(t)}\tr{\stressFlux-Q}.
\end{equation}
The equation, which yields the units of power, encodes the facts that variation of total energy in a volume {\it following the continuum} can happen only due to advection of the energy at the boundary, i.e., the total energy does not have sinks or sources within the domain (excluding external body forces and chemical reactions
which we ignore in this work for simplicity of exposition).
The two mechanisms through which energy can be advected at the boundary are the external work due to mechanical stresses $\stressFlux$ and the heat flux $Q$, both $(n-1)$-forms which can be integrated at the boundary $\partial V(t)$ to produce the power flux towards the domain. The trace operator in the integral is formally needed to transform its arguments from $(n-1)$-forms on $V(t)$ to $(n-1)$-forms on $\partial V(t)$, which makes the right term in (\ref{eq:firstPrinciple}) well-defined.
Applying Stokes theorem to the latter expression, and using arbitrariness of the volume $V(t)$ yields the differential energy equation of the continuum
\begin{equation}
\dot{\mathcal{E}}+\lie{u}{\mathcal{E}}=\extd{(\stressFlux-Q)}.
\end{equation}
Using the Leibniz rule (\ref{eq:leibnizextcovd}) for the stress power $\stressFlux$, one obtains
\begin{equation}
\label{eq:totalenergy}
\dot{\mathcal{E}}+\lie{u}{\mathcal{E}}= u \dot{\wedge}\extcovd \mathcal{T}+ \nabla u \dot{\wedge} \mathcal{T} -\extd{Q}.
\end{equation}
Now we use the assumptions that the total 
energy density can be decomposed in $\mathcal{E}=\mathcal{K}+\mathcal{U}$, i.e., it is given by the sum of kinetic energy density $\mathcal{K}(u,\mu)=\frac{1}{2}g(u,u)\mu$ and an internal energy $\mathcal{U}$, which is a macroscopic reflection of several molecular phenomena whose dynamical equation is needed. In order to derive it, we calculate $\dot{\mathcal{K}}$ using continuity and momentum equations (\ref{eq:masscont}) and (\ref{eq:momentum}). One way to perform the calculation is by considering the kinetic energy density a function of the velocity one-form $\nu=u^{\flat}$ and the mass density $\mu$, i.e., $\mathcal{K}=\frac{1}{2}(\iota_{\nu^{\#}}\nu) \mu$, and computing the rate using the chain rule (see e.g., \cite{califano2021geometric,VanDerSchaft2002HamiltonianFlow}) as
$\dot{\mathcal{K}}=\frac{1}{2}(\iota_{\nu^{\#}}\nu)\wedge \dot{\mu}+(\iota_{\nu^{\#}}\mu) \wedge \dot{\nu}$. 
Using the field equations (\ref{eq:masscont},\ref{eq:momentum}), one obtains
\begin{equation}
    \dot{\mathcal{K}}=-\extd{} \left(\frac{1}{2}i_{\nu^{\#}}\nu \wedge i_{\nu^{\#}}\mu \right)+\iota_{\nu^{\#}}(\extcovd \mathcal{T}).
\end{equation}
Recognising the term inside the parenthesis $\frac{1}{2}i_{\nu^{\#}}\nu \wedge i_{\nu^{\#}}\mu=\iota_{\nu^{\#}} \mathcal{K}$, returning to use $u$ for $\nu^{\#}$, and applying Cartan's formula, we obtain the kinetic energy density differential equation
\begin{equation}
\label{eq:kineticenergy}
    \dot{\mathcal{K}}+\lie{u}{\mathcal{K}}=u \dot{\wedge}\extcovd \mathcal{T}.
\end{equation}
In order to obtain an equation in the internal energy density we simply subtract (\ref{eq:kineticenergy}) from (\ref{eq:totalenergy}), obtaining

\begin{equation}
\label{eq:internalenergy}
    \dot{\mathcal{U}}+\lie{u}{\mathcal{U}}= \nabla u \dot{\wedge} \mathcal{T}-\extd{Q},
\end{equation}
rightfully indicating that the stress power $\nabla u \dot{\wedge} \mathcal{T}$ is a source for internal energy.

\subsection{Second principle}

In a state of thermodynamic equilibrium it is assumed that the so-called \textit{Gibbs relation} is valid, relating entropy and other thermodynamic potentials as
\begin{equation}
\label{eq:gibbsnonrigorous}
    T \delta \calS= \delta \mathcal{U} +p \delta \calV, 
\end{equation}
where $\calS$ is the entropy, $\calV$ is the volume, $T$ is the temperature, and $p$ is the pressure. For simplicity of exposition we ignore the component of Gibbs relation corresponding to reactive flows in the following. 
We should here clarify the role of the "differential" $\delta$ and the nature of the thermodynamic variables involved in this formula in order to \textit{postulate} a
version of (\ref{eq:gibbsnonrigorous}) which possesses mathematical rigour consistent with the differential geometric description that we use and 
leads to a correct entropy equation, i.e., a dynamic equation encoding the second principle of thermodynamics. 

As first consideration, entropy and volume are \textit{extensive} thermodynamic variables, which translates in this differential geometric context to require them being densities (top-forms) over the volume where the continuum evolves, i.e., $\calS,\calV \in \Omega^n(V)$. Temperature and pressure are instead \textit{intensive} thermodynamic variables (i.e, variables which can be evaluated at a point in space, but do not possess additive property) which translates to require them being $0$-forms, i.e., $T,p \in \Omega^0(V)$. Notice that this assignments make the exterior derivative $\extd{}$ a bad candidate for the differential $\delta$, since (\ref{eq:gibbsnonrigorous}) would produce a collapsing identity on $n+1$ forms.

Secondly, we are dealing with non-equilibrium field equations describing the dynamic evolution of a continuum, and as such we need to \textit{assume} that a relation in the form (\ref{eq:gibbsnonrigorous}) holds also in this condition, where a non null macroscopic velocity field $u$ is one of the state variables of the system.

Third, the Gibbs relation needs to be invariant with respect to Galilean transformations, that is it must look the same for all inertial observers.

A natural choice in this context for the differential $\delta$ in (\ref{eq:gibbsnonrigorous}) is the \textit{material derivative} operator $\DDt{}:=\frac{\partial}{\partial t}+\lie{u}{}$, which computes the variation of its argument along the streamline generated by $u$. 
Thus we postulate the following version of Gibbs relation:
\begin{equation}
\label{eq:gibbsrigorous}
    T \DDt{\calS} = \DDt{\mathcal{U}} +p \DDt{\calV}, 
\end{equation}
where we use for $\calV$ the natural volume form induced by the Riemannian metric, i.e., $\calV=\star 1$.

\begin{remark}
	Notice that the version (\ref{eq:gibbsrigorous}) of Gibbs relation, postulated to be consistent in the space of top-forms, is equivalent to the more frequently proposed vector calculus version (see e.g., \cite{Asinari2016OverviewDynamics}) which involves states of specific internal energy $U  \in \Omega^0(V)$, specific entropy $S \in \Omega^0(V)$ and specific volume $1/\rho \in \Omega^0(V)$. To see this we first relate the specific ($0$-forms) and density (top-forms) quantities by $\mathcal{U}=U \wedge \mu$ and $\mathcal{S}=S \wedge \mu$. Since by (\ref{eq:masscont}) $\DDt{\mu}=0$,  (\ref{eq:gibbsrigorous}) can be rewritten as
	
	\begin{equation*}
		T \DDt{S}\wedge \mu = \DDt{U} \wedge \mu +p \DDt{\calV},
	\end{equation*}
	but $\mathcal{V}=\star 1=\frac{1}{\rho} \wedge \mu$, and as a consequence
	\begin{equation*}
		T \DDt{S}= \DDt{U} +p \DDt{} \left( \frac{1}{\rho} \right)
	\end{equation*}
	holds on the space of scalar functions, where the material derivative is equivalent to the vector calculus version $\DDt{}:=\frac{\partial}{\partial t}+ u \cdot \nabla$, where the ``nabla'' operator is meant in the vector calculus sense, as the gradient operator acting on scalar functions.
\end{remark}

Now we show that the choice (\ref{eq:gibbsrigorous}) consistently leads to a representation of the second principle in this differential geometric context. Substituting (\ref{eq:internalenergy}) in (\ref{eq:gibbsrigorous}) yields
\begin{equation}\label{eq:Entrpy_intermed}
    T \DDt{\calS} = \nabla u \dot{\wedge}\mathcal{T} -\extd{Q} +p \DDt{\calV}.
\end{equation}
In order to proceed from here we must use a standard assumption involving the constitutive relation of a continuum regarding the stress form $\mathcal{T}$. 
In particular the stress can be decomposed in the sum of a \textit{pressure stress} $\mathcal{T}_p=- p\mathcal{V} \in \Omega^1(M)\tens \Omega^{n-1}(M) $ which is solely determined by the pressure function in (\ref{eq:gibbsrigorous}), and the residual stress $\mathcal{T}_r=\mathcal{T}-\mathcal{T}_p$, depending on the constitutive equations of the considered continuum.
An explicit calculation of $ \nabla u \dot{\wedge}\mathcal{T}_p$ can be obtained by the following proposition.

\begin{proposition}\label{prop:nablaU}
	For any $f\in \Omega^0(M)$ and $u\in \vecf{M}$, the pairing $\nabla u \dot{\wedge} f \mathcal{V}$ between 
	$\nabla u \in \vecf{M} \tens \Omega^1(M)$ and $f \mathcal{V} \in \Omega^1(M) \tens \Omega^{n-1}(M)$ is given by
	$$\nabla u \dot{\wedge} f \mathcal{V} = f  {\normalfont\textrm{div}}(u) \calV \in \Omega^{n}(M).$$
\end{proposition}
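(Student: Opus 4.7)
The plan is to recognize the statement as a direct corollary of Proposition \ref{prop_inner}, after first interpreting $f\mathcal{V}$ as a covector-valued $(n-1)$-form via the identification already used in the paper for the pressure stress. Specifically, the decomposition $\mathcal{T}_p = -p\mathcal{V} = \star_2(-pg)$ (arising from the Cauchy tensor $\sigma_p = -pg$ through $\mathcal{T}=\star_2\sigma$) makes clear that, more generally,
\begin{equation*}
f\mathcal{V} \;=\; \star_2(fg)\;\in\;\Omega^1(M)\tens\Omega^{n-1}(M),
\end{equation*}
where $fg \in \Omega^1(M)\tens\Omega^1(M)$ denotes the scalar multiple of the metric. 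In local coordinates this reads $f\mathcal{V} = f\,\extd x^a \tens \iota_{\partial_a}\mathcal{V}$, which will be convenient for a sanity check.

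With this identification in hand, the second step is to invoke Proposition \ref{prop_inner} with $\sigma := fg$ and $\zeta := (\nabla u)^{\flat_1}$, so that $\zeta^{\#_1} = \nabla u$. The proposition immediately yields
\begin{equation*}
\nabla u \dot\wedge f\mathcal{V} \;=\; \nabla u \dot\wedge \star_2(fg) \;=\; \bigl((\nabla u)^{\flat_1} \!:\! fg\bigr)\,\mathcal{V}.
\end{equation*}
The third step is a short index contraction: the two factors of $g$ contained in $fg$ contract the two inverse metrics in the definition of $:$ against $(\nabla u)^{\flat_1}$, collapsing everything to the trace $\nabla_c u^c = \textrm{div}(u)$, so that $(\nabla u)^{\flat_1}:fg = f\,\textrm{div}(u)$. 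This gives the claim.

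The main (and really the only) obstacle is the bookkeeping around how a top-form is promoted to a covector-valued $(n-1)$-form, since the notation section handles this identification only implicitly; once $f\mathcal{V} = \star_2(fg)$ is granted, the result reduces to an application of the already-established Proposition \ref{prop_inner}. As a standalone alternative I would simply compute directly in coordinates using $\nabla u = \nabla_b u^a\,\partial_a \tens \extd x^b$ and $f\mathcal{V} = f\,\extd x^a \tens \iota_{\partial_a}\mathcal{V}$: performing the $\dot\wedge$ by pairing the first legs and wedging the second legs, and then using the elementary identity $\extd x^b \wedge \iota_{\partial_a}\mathcal{V} = \delta^b_a\,\mathcal{V}$, one obtains $f\,\nabla_b u^a\,\delta^b_a\,\mathcal{V} = f\,\textrm{div}(u)\,\mathcal{V}$ in a single line — so either route works and both confirm the formula.
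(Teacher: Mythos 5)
Your proof is correct, but it takes a genuinely different route from the paper's. The paper proves this proposition intrinsically: it applies the Leibniz rule (\ref{eq:leibnizextcovd}) to write $\nabla u \dot{\wedge} f\mathcal{V} = \extd{(u \dot{\wedge} f\mathcal{V})} - u \dot{\wedge}\extcovd(f\mathcal{V})$, expands both terms with the Leibniz rules for $\extd{}$ and $\iota_u$, watches the $\extd{f}$ terms cancel, and finishes with $f\,\extd{}\iota_u\mathcal{V} = f\,\lie{u}{\calV} = f\,\textrm{div}(u)\calV$. You instead observe that $f\mathcal{V} = \star_2(fg)$ --- the same identification that underlies $\mathcal{T}_p = \star_2(-pg)$ --- and then invoke Proposition \ref{prop_inner} with $\sigma = fg$, $\zeta = (\nabla u)^{\flat_1}$, reducing everything to the trace identity $(\nabla u)^{\flat_1} : fg = f\,\nabla_a u^a = f\,\textrm{div}(u)$. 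Both arguments are sound, and the key identification $f\mathcal{V} = \star_2(fg)$ is legitimate (it is exactly the paper's convention for promoting a top-form to a covector-valued $(n-1)$-form). What each buys: the paper's derivation is coordinate-free, valid for any $n$, and reuses the same Leibniz machinery that drives the power balance (\ref{eq:power_balance}), which makes the cancellation structure transparent; your reduction is shorter and conceptually ties the proposition to the Cauchy-stress picture, but it formally inherits the limitation that Proposition \ref{prop_inner} is only proved in the paper for $n=2,3$. Your standalone coordinate computation via $\extd{x^b}\wedge\iota_{\partial_a}\mathcal{V} = \delta^b_a\,\mathcal{V}$ is also correct, works in any dimension, and is arguably the most elementary of the three.
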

\begin{proof}
	Using the Leibniz rule for the exterior covariant derivative (\ref{eq:leibnizextcovd}) followed by the Leibniz rules of the exterior derivative and the interior product, we get
	\begin{align*}
		\nabla u \dot{\wedge} f \mathcal{V} &= \extd{(u\dot{\wedge} f \mathcal{V})} - u \dot{\wedge} \extcovd(f \mathcal{V}) = \extd{(f \wedge \iota_{u}\mathcal{V})} - u \dot{\wedge} (\extd{f}\tens \mathcal{V}) \\
		&= \extd{f} \wedge \iota_{u}\mathcal{V} + f \wedge \extd{}\iota_{u}\mathcal{V} - \iota_u\extd{f}\wedge \mathcal{V} \\
		&= \extd{f} \wedge \iota_{u}\mathcal{V} + f \wedge \extd{}\iota_{u}\mathcal{V} -\extd{f}\wedge \iota_u\mathcal{V} = f \wedge \extd{}\iota_{u}\mathcal{V}.
	\end{align*}
	Using $\extd{}\iota_{u}\mathcal{V} =  \lie{u}{\calV} = \textrm{div}(u) \calV$ concludes the proof.
\end{proof}

Consequently, we have that $\nabla u \dot{\wedge}\mathcal{T}_p = - p \textrm{div}(u) \calV$.
Notice that this term is proportional to the divergence of the velocity vector field of the continuum, and appears with opposite sign in the kinetic and internal energy equations, representing a reversible exchange between these two energy fields due to work of compression or expansion.
Using this result in (\ref{eq:Entrpy_intermed}) and the fact that $\DDt{\calV}=\textrm{div}(u) \calV$, we end up with the entropy equation
\begin{equation}
\label{eq:entropy}
    \DDt{\calS}= \frac{1}{T}\nabla u \dot{\wedge}\mathcal{T}_r -\frac{\extd{Q}}{T},
\end{equation}
which serves as a definition for the extensive entropy field $\calS$ of the continuum.
An insightful form of the latter equation is obtained applying the integration by parts identity \[\frac{\extd Q}{T}=\extd{} \left(\frac{Q}{T} \right)+\frac{1}{T^2}  \extd{T} \wedge Q.\]
Substituting the latter in (\ref{eq:entropy}) one obtains 

\begin{equation}
\label{eq:entropy2}
\dot{\mathcal{S}}+\mathcal{L}_{u} \mathcal{S}  =\tentropy_T+\tentropy_{\mathcal{T}_r}-\extd{} \left(\frac{Q}{T} \right),
\end{equation}
with 
\begin{equation*}
	\tentropy_T:=- \frac{1}{T^2} \extd{T} \wedge Q, \qquad \tentropy_{\mathcal{T}_r}:=\frac{1}{T} \nabla u \dot{\wedge}\mathcal{T}_r.
\end{equation*}
Integrating the latter on a volume $V(t)$ which follows the continuum, we recognise (by means of (\ref{eq:ReynoldsNonAutonomous})) on the left hand side the variation of the total entropy $S_V:=\int_{V(t)} \mathcal{S}$, i.e.,
\begin{equation}
    \dot{S}_V=\int_{V(t)} \tentropy_T+\tentropy_{\mathcal{T}_r} -\int_{\partial V(t)} \textrm{tr} \left( \frac{Q}{T} \right),
\end{equation}
which is an integral equation representing a balance law like the one postulated for the total energy in (\ref{eq:totalenergy}), but this time with source terms.
In fact $\tentropy_T$ is the entropy source term due to non null temperature gradients, while $\tentropy_{\mathcal{T}_r}$ is the entropy source due to non conservative stresses.

The constitutive equations defining $\mathcal{T}_r$ and $Q$ for the specific material should be constraint in a way that $\tentropy_{T} \geq 0$ and $\tentropy_{\mathcal{T}_r} \geq 0$ hold true, which is a statement for the second principle of thermodynamics (entropy can have only positive source terms). 
Notice that a boundary term of entropy is present in case of non-adiabatic flows (i.e., $\tr{Q} \neq 0$). In case of adiabatic flow, or flow over a closed manifold (i.e., $\partial V =\emptyset$), one gets the standard $\dot{S}_V \geq 0$, i.e., total entropy of a closed system cannot decrease autonomously.

\section{Fourier-Navier-Stokes fluids}
\label{sec:FNS}
Now we apply the general formulation of the previous section to the case of Fourier-Navier-Stokes fluids by specifying the constitutive equations relating $\mathcal{T}_r$ and $Q$ to the continuum states $u, \mu$ and $T$ for Newtonian fluids.

For the heat flux $Q$ we chose the simple linear constitutive equation involving the \textit{heat conductivity coefficient} $k$, also called \textit{Fourier law}:
\begin{equation}
	\label{eq:Qconstitutive}
	Q=-k \star \extd T.
\end{equation}
With this choice one easily verifies that the entropy source term due to non null temperature gradients $$\tentropy_T=\frac{k}{T^2} \extd T \wedge \star \extd T\geq 0$$ is non-negative.

Whereas for the residual stress tensor $\mathcal{T}_r$, it is given by the \textit{viscous stress tensor}, given by the sum of a bulk stress $\mathcal{T}_{\lambda}$ and a shear stress $\mathcal{T}_{\kappa}$, defined by \cite{Gilbert2019AMechanics,califano2021geometric}
\begin{align}
    \mathcal{T}_{\lambda} &:= \lambda (\textrm{div}(u)\calV) ,
    \label{eq:DefBulkViscosity}\\
   \mathcal{T}_{\kappa} &:= 
2\kappa (\star_2 \epsilon), \label{eq:DefShearViscosity}
\end{align}
where $\lambda$ and $\kappa$ are respectively the bulk and shear viscosity coefficients, and $\varepsilon$ is given by (\ref{eq:rateofstrain}).
The bulk stress $\mathcal{T}_{\lambda}$ depends on the divergence of the velocity vector field $\textrm{div}(u)$, while the shear stress $\mathcal{T}_{\kappa}$ is defined in order to model viscous stresses whenever the transport of the metric under the flow of $u$ is non-zero, i.e., when $u$ fails to be the generator of a rigid body motion.

Using Props. \ref{prop_inner}, \ref{prop:nablaU} and equation (\ref{eq:symmetry}), the entropy source term $\tentropy_{\mathcal{T}_r}$ due to the viscous stress in (\ref{eq:entropy2}) is expressed as:
\begin{align*}
	\tentropy_{\mathcal{T}_r} = \frac{1}{T} \nabla u \dot{\wedge}\mathcal{T}_r
							&= \frac{\lambda}{T} \nabla u \dot{\wedge} \textrm{div}(u)\calV + \frac{2\kappa}{T} \nabla u \dot{\wedge} \star_2 \epsilon \\
							&= \frac{\lambda}{T} \textrm{div}(u) \wedge \star \textrm{div}(u) + \frac{2\kappa}{T} \text{sym}(\nabla u) \dot{\wedge} \star_2 \epsilon \\
							&= \frac{\lambda}{T} \textrm{div}(u) \wedge \star \textrm{div}(u) + \frac{2\kappa}{T} \epsilon^{\#_1} \dot{\wedge} \star_2 \epsilon,\\
							&= \frac{1}{T}(\lambda \textrm{div}(u)^2 + 2\kappa \epsilon:\epsilon) \calV \geq 0,
\end{align*}
which results in a non-negative source of entropy driven by non-null divergence and rate of strain of the macroscopic vector field $u$.

We finally stress that the positiveness of the entropy source terms follow in this setting by the choice of Newtonian constitutive relations for the stress and $k>0$, but in case of non-Newtonian fluids it represents a \textit{constraint} for the constitutive relations that can be used, since a negativeness of those terms would constitute a violation of the second principle.

\section{Conclusion}
\label{sec:conclusion}
An overview of entropy creation mechanism in the context of continuum mechanics on general manifolds has been presented. The specialisation to Fourier-Navier-Stokes flows shows the consistency of the generalisation that this axiomatic treatment provides to vector-calculus formulations valid in Euclidean spaces. Future work aims at investigating the geometric structure of infinite-dimensional thermodynamic phase-space, potentially leading to novel structure-preserving discretisation schemes and control strategies.

\section*{Funding}
This work was supported by the PortWings project funded by the European Research Council [Grant Agreement No. 787675]

\bibliography{references1}

\end{document}